\newtheorem{Theorem}{Theorem}
\newtheorem{Definition}{Definition}
\newtheorem{Proposition}{Proposition}
\newtheorem{Remark}{Remark}
\newtheorem{Example}{Example}
\begin{document}

\sloppy

\title{Reconstruction Guarantee Analysis of Binary Measurement Matrices Based on Girth}

 \author{
   \IEEEauthorblockN{Xin-Ji Liu \IEEEauthorrefmark{1}\IEEEauthorrefmark{2}, Shu-Tao Xia\IEEEauthorrefmark{1}\IEEEauthorrefmark{2}}
   \IEEEauthorblockA{\IEEEauthorrefmark{1}Graduate School at Shenzhen, Tsinghua University,
     Shenzhen, Guangdong, China\\
     \IEEEauthorrefmark{2}Tsinghua National Laboratory for Information Science and Technology, Beijing, China\\
     Email: liuxj11@mails.tsinghua.edu.cn, xiast@sz.tsinghua.edu.cn}
 }

\addtolength{\textheight}{2.5cm}

\maketitle
\begin{abstract}
Binary 0-1 measurement matrices, especially those from coding theory, were introduced to compressed sensing (CS) recently.
Good measurement matrices with preferred properties, e.g., the restricted isometry property (RIP) and nullspace property (NSP), have no known general ways to be efficiently checked.
Khajehnejad \emph{et al.} made use of \emph{girth} to certify the good performances of sparse binary measurement matrices.
In this paper, we examine the performance of binary measurement
matrices with uniform column weight and arbitrary girth
under basis pursuit.
Explicit sufficient conditions of exact reconstruction
are obtained, which improve
the previous results derived from RIP for any girth $g$ and results from NSP when $g/2$ is odd.
Moreover, we derive explicit $l_1/l_1$, $l_2/l_1$ and $l_\infty/l_1$
sparse approximation guarantees.
These results further show that
large girth has positive impacts on the performance of binary measurement
matrices under basis pursuit, and the binary parity-check matrices of good
LDPC codes are important candidates of measurement matrices.

\end{abstract}

\section{Introduction}\label{intro}
Consider a \emph{$k$-sparse} signal $\textit{\textbf{x}}=(x_{1}, x_{2}, \ldots, x_{n})^T\in\mathbb{R}^{n}$ which has at most $k$ nonzero entries.
Let $H\in \mathbb{R}^{m\times n}$ be a measurement matrix with $m\ll n$ and $\textit{\textbf{y}}=H\textbf{\textit{x}}$ be the $m$-dimensional measurement vector.
The \emph{compressed sensing} (CS) problem \cite{ecjr,dono} aims to solving the following \emph{$l_{0}$-optimization} problem
\begin{equation}\label{l0opt}
    \min ||\textit{\textbf{x}}||_{0} \quad s.t.\quad H\textbf{\textit{x}}=\textit{\textbf{y}},
\end{equation}
where $||\textit{\textbf{x}}||_{0}\triangleq |\{i:x_{i}\neq 0\}|$ denotes the $l_{0}$-norm or (Hamming) weight of $\textit{\textbf{x}}$.
Unfortunately, it is well-known that the problem (\ref{l0opt}) is NP-hard in general.
In compressed sensing, a convex relaxation of (\ref{l0opt}), the \emph{$l_{1}$-optimization} (a.k.a. basis pursuit, BP), is usually used instead,
\begin{equation}\label{l1}
    \min ||\textit{\textbf{x}}||_{1} \quad s.t.\quad H\textbf{\textit{x}}=\textit{\textbf{y}},
\end{equation}
where $||\textit{\textbf{x}}||_{1}\triangleq\sum_{i=1}^{n}|x_{i}|$
denotes the $l_{1}$-norm of $\textit{\textbf{x}}$.
The optimization problem (\ref{l1}) could be turned into a \emph{linear programming} (LP) problem and thus tractable.

In order to recover the original signals exactly, measurement matrices need to satisfy some properties.
Cand$\grave{\textup{e}}$s and Tao \cite{ectt05} showed that if $H$ satisfies the \emph{restricted isometry property} (RIP) with relatively small restricted isometry constant $\delta_{2k}$, $l_1$-optimization (\ref{l1}) can output exact recovery of any $k$-sparse signal.
In addition, when the signal $\textit{\textbf{x}}$ is not exactly sparse, (\ref{l1}) is supposed to produce a high quality $k$-\emph{sparse approximation} of $\textit{\textbf{x}}$ as good as one measures the $k$ largest values of $\textit{\textbf{x}}$ directly.
RIP is only sufficient, another property, called the \emph{nullspace property} (NSP) \cite{wxbh}, was proposed and proved to be both sufficient and necessary for a measurement matrix to be effective under basis pursuit.
If $H$ satisfies NSP, $l_1$-optimization (\ref{l1}) will produce exact recovery of sparse signals and high quality sparse approximations of approximately sparse signals.
Many random matrices, e.g., Fourier matrices, Gaussian matrices, \emph{etc.}, were verified to satisfy RIP with overwhelming probability.
However, there is no guarantee that a specific realization of random matrices works and it costs large storage space.
As a result, deterministic constructions of measurement matrices are necessary.
Among them, binary 0-1 matrices from coding theory attract many attentions
\cite{aafm,lggz}.
Recently, Dimakis, Smarandache, and Vontobel \cite{adrs} found that LP decoding of LDPC codes is very similar to LP reconstruction of compressed sensing, and they further showed that the sparse binary parity-check matrices of good LDPC codes can be used as \emph{provably} good measurement matrices for compressed sensing under basis pursuit.

For a binary matrix $H$, let $G_{H}$ denote its \emph{Tanner graph} which was introduced to study LDPC codes \cite{tanner}. The \emph{girth} of $H$ or $G_H$ is defined as the minimum length of circles in $G_{H}$. Note that girth is an even number not smaller than 4.
Deterministic matrices with RIP or NSP have no general ways to be definitely constructed or efficiently verified \cite{akas}.
Instead, in \cite{adrs,akas}, girth is used to evaluate the performance of sparse binary measurement matrices under basis pursuit.
It was shown that binary measurement matrices with girth $\Omega(log\,n)$, uniform row weights and uniform column weights have robust recovery guarantees under basis pursuit with high probability.
Since girth is much easier to check, it is considered as a good property to certify good binary measurement matrices.

In this paper, we examine the performance of binary measurement
matrices with uniform column weight $\gamma$ and arbitrary girth $g$
under basis pursuit.
This kind of matrices are often used as measurement matrices, such as those based on expander graphs \cite{sjwx} where binary matrices with uniform column weight are associated with left-regular bipartite graphs.
Let $H$ be such a binary measurement matrix with uniform column weight $\gamma$ and girth $g$.
Suppose any two distinct columns of $H$ have at most $\lambda$ common 1's, i.e., $\lambda$ is the maximum inner product of two different columns.
Obviously, for $g\geq6$, $\lambda=1$.
One of our main results implies that $H$ could exactly recover a $k$-sparse signal if $k<\sum_{u=0}^{t+1} (\gamma-1)^u$, where $t\triangleq\lfloor\frac{g-6}{4}\rfloor\geq0$. When $g=6$, the best known result derived from RIP \cite[Prop.1]{jbsd} or NSP \cite[Lem.12, Th.3]{adrs} implies the exact recovery for $k<(\gamma+1)/2$, while our result only requires $k<\gamma$.
In fact, we derive explicit $k$-sparse approximation guarantees which involve only $k,\;g,\;\gamma,\;\lambda$ when $g=4$ and
$k,\;g,\;\gamma$ when $g\geq6$ respectively.
Our results show that the larger $g$ or/and $\gamma$ are, the larger $k$ will be, thus the better the measurement matrix $H$ is.
This further suggests that good parity-check matrices for LDPC codes are important candidates of measurement matrices for compressed sensing.

\section{Notations and Preliminaries}\label{prelimi}
Let $H\in \mathbb{R}^{m\times n}$ with $m\ll n$, define $[n]\triangleq\{1,\ldots,n\}$ and
\begin{eqnarray*}
Nullsp_{\mathbb{R}}(H)&\triangleq&\{\textbf{\textit{w}}\in
\mathbb{R}^n \;:\; H\textbf{\textit{w}}=\textbf{0}\},\\
Nullsp_{\mathbb{R}}^{*}(H)&\triangleq& Nullsp_{\mathbb{R}}(H)\backslash \{ \textbf{0}\}.
\end{eqnarray*}
Let $K\subseteq[n]$, $|K|=k$, define $\bar{K}\triangleq[n]\backslash K$.
For any real vector $\textit{\textbf{a}}=(a_1, a_2, \ldots, a_n)^T\in \mathbb{R}^n$, $\textit{\textbf{a}}_K$ denotes the vector with $k$ entries of $\textit{\textbf{a}}$ whose positions appear in $K$.
Define the $l_2$ and $l_\infty$ norms of $\textit{\textbf{a}}$ as $||\textit{\textbf{a}}||_2\triangleq\sqrt{\sum_i{a_i^2}}$ and $||\textit{\textbf{a}}||_\infty\triangleq \max_i |a_i|$.
The support of $\textit{\textbf{a}}$ is defined by $supp(\textit{\textbf{a}})\triangleq\{i\,:\, a_i\neq0\}$ and denotes $|\textit{\textbf{a}}|\triangleq(|a_1|,|a_2|,\ldots,|a_n|)^T$.
For $\textit{\textbf{a}}\neq \textbf{0}\in \mathbb{R}^n$,
the AWGN channel pseudoweight of $\textit{\textbf{a}}$ is
$w_{p}^{AWGN}(\textit{\textbf{a}}) \triangleq \frac {||\textit{\textbf{a}}||_{1}^2}{||\textit{\textbf{a}}||_{2}^2}$,
and the max-fractional pseudoweight is
$w_{max-frac}(\textit{\textbf{a}}) \triangleq \frac {||\textit{\textbf{a}}||_{1}}{||\textit{\textbf{a}}||_{\infty}}$.

Firstly, the $l_p/l_q$ sparse approximation guarantees are used here as performance metrics for CS under basis pursuit.
\begin{Definition}\cite{adrs}
An \emph{$l_p/l_q$ sparse approximation guarantee} for basis pursuit means that the $l_1$-optimization (\ref{l1}) outputs an estimate $\hat{\textit{\textbf{x}}}$ such that
\begin{equation}\label{lplq}
  ||\textit{\textbf{x}}-\hat{\textit{\textbf{x}}}||_p\leq C_{p,q}(k)\cdot ||\textit{\textbf{x}}-\textit{\textbf{x}}'||_q,
\end{equation}
where $||\cdot||_p$ ($||\cdot||_q$) denotes the $l_p$ ($l_q$) norm, and $\textit{\textbf{x}}'$ is the best $k$-sparse approximation of $\textit{\textbf{x}}$.
\end{Definition}

Next, we introduce the well-known NSP.
\begin{Definition}\cite{wxbh,adrs}
  Let $H\in \mathbb{R}^{m\times n}$, $k\in \mathbb{N}$, and $C\geq1$.
  We say that $H$ has the \emph{nullspace property} $NSP_{\mathbb{R}}^{\leq}(k,C)$ denoted by $H\in NSP_{\mathbb{R}}^{\leq}(k,C)$, if $\forall K\subseteq[n]$ with $|K|\leq k$
  \begin{equation*}
    C\cdot||\textit{\textbf{w}}_K||_1\leq||\textit{\textbf{w}}_{\bar{K}}||_1,\quad \forall \textit{\textbf{w}}\in Nullsp_{\mathbb{R}}(H).
  \end{equation*}
\end{Definition}

In \cite{adrs}, Dimakis, Smarandache, and Vontobel pointed out that the LP reconstruction of CS is very similar to the LP decoding of LDPC codes.
Performance guarantees in LDPC codes were translated to the corresponding sparse approximation guarantees in CS, e.g., the next result gives the connection between AWGN channel pseudoweight and $l_2/l_1$ guarantee.
\begin{Proposition}\label{citeth13}
\cite[Th.13]{adrs}
Let $H$ be a binary $m\times n$ measurement matrix, $K\subseteq[n]$, $|K|=k$, $C'$ be an arbitrary positive real number such that $C'>4k$.
If $\forall\textit{\textbf{w}}\in Nullsp_{\mathbb R}^{*}(H)$
\begin{equation}\label{awgncond}
  w_p^{AWGN}(|\textit{\textbf{w}}|)\geq C',
\end{equation}
then the output $\hat{\textit{\textbf{x}}}$ produced by the $l_1$-optimization (\ref{l1}) satisfies
\begin{equation}\label{oldl2l1}
||\textit{\textbf{x}}-\hat{\textit{\textbf{x}}}||_2\leq\frac{C''}{\sqrt{k}}||\textit{\textbf{x}}_{\bar{K}}||_1 \quad \mbox{with} \quad C''\triangleq \frac{1}{\sqrt{\frac{C'}{4k}}-1}.
\end{equation}
\end{Proposition}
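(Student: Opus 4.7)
The plan is to use the standard nullspace/cone argument, adapted so that the AWGN pseudoweight hypothesis plays the role usually played by RIP or NSP. Let $\hat{\textbf{x}}$ be the output of (\ref{l1}) and set $\textbf{w} \triangleq \hat{\textbf{x}} - \textbf{x}$. Since $H\textbf{x} = H\hat{\textbf{x}} = \textbf{y}$, we have $\textbf{w} \in Nullsp_{\mathbb{R}}(H)$. If $\textbf{w}=\textbf{0}$ the claim is trivial; otherwise $\textbf{w} \in Nullsp_{\mathbb{R}}^{*}(H)$ and (\ref{awgncond}) applied to $|\textbf{w}|$ gives the key quantitative bridge $\|\textbf{w}\|_2 \leq \|\textbf{w}\|_1/\sqrt{C'}$.

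First I would choose $K$ to be the support of the best $k$-sparse approximation $\textbf{x}'$ of $\textbf{x}$ and derive a cone-type inequality. Splitting $\|\hat{\textbf{x}}\|_1 = \|\textbf{x}+\textbf{w}\|_1 \leq \|\textbf{x}\|_1$ along $K$ and $\bar K$, and applying the reverse triangle inequality on $K$ together with the triangle inequality on $\bar K$, all of the $\|\textbf{x}_K\|_1$ terms cancel and one obtains
\[
\|\textbf{w}_{\bar K}\|_1 \leq \|\textbf{w}_K\|_1 + 2\|\textbf{x}_{\bar K}\|_1,
\]
and hence
\[
\|\textbf{w}\|_1 \leq 2\|\textbf{w}_K\|_1 + 2\|\textbf{x}_{\bar K}\|_1.
\]

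The second step is to close this inequality in $\|\textbf{w}\|_1$ alone. Cauchy--Schwarz on the $k$-coordinate subvector and the pseudoweight bound give
\[
\|\textbf{w}_K\|_1 \leq \sqrt{k}\,\|\textbf{w}_K\|_2 \leq \sqrt{k}\,\|\textbf{w}\|_2 \leq \sqrt{k/C'}\,\|\textbf{w}\|_1,
\]
which, substituted into the previous display, yields
\[
\bigl(1 - 2\sqrt{k/C'}\,\bigr)\|\textbf{w}\|_1 \leq 2\|\textbf{x}_{\bar K}\|_1.
\]
The coefficient on the left is strictly positive precisely because the hypothesis $C' > 4k$ forces $\sqrt{k/C'} < 1/2$; this is where the otherwise mysterious threshold $4k$ enters. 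Solving for $\|\textbf{w}\|_1$, re-applying $\|\textbf{w}\|_2 \leq \|\textbf{w}\|_1/\sqrt{C'}$, and simplifying $2/(\sqrt{C'}-2\sqrt{k})$ to $1/\bigl(\sqrt{k}\,(\sqrt{C'/(4k)}-1)\bigr) = C''/\sqrt{k}$ reproduces (\ref{oldl2l1}).

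The argument is a chain of elementary inequalities, so no single step is a serious obstacle. The step requiring the most care is the cone inequality: the factor $2$ in front of $\|\textbf{x}_{\bar K}\|_1$ is exactly what produces the matching constant $C''$, and a sign slip in the triangle / reverse-triangle split would spoil it. The only structural point worth flagging is that the role of $C' > 4k$ is not a matter of convenience but is dictated by the inversion of $1 - 2\sqrt{k/C'}$; this is what makes the hypothesis appear in exactly the form stated.
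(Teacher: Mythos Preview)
The paper does not supply its own proof of this proposition; it is simply quoted as \cite[Th.~13]{adrs} and then used as a benchmark. So there is nothing in the paper to compare your argument against.

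Your argument is correct. The chain --- cone inequality from $l_1$-minimality, Cauchy--Schwarz on $\textit{\textbf{w}}_K$, the pseudoweight bound $\|\textit{\textbf{w}}\|_2\leq\|\textit{\textbf{w}}\|_1/\sqrt{C'}$, then inversion using $C'>4k$ --- is the standard route and the algebra checks out: $2/(\sqrt{C'}-2\sqrt{k})$ indeed equals $C''/\sqrt{k}$. One small presentational point: in the statement $K$ is an \emph{arbitrary} subset of size $k$, so you should not say you ``choose $K$ to be the support of the best $k$-sparse approximation''. Your cone-inequality derivation in fact works verbatim for any such $K$, so just drop that sentence; otherwise a reader might think the bound is only proved for one particular $K$.
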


\section {Main Results}\label{mainresult}
We start this section by giving a sufficient condition for a binary measurement matrix to provide the $l_1/l_1$, $l_2/l_1$ and $l_\infty/l_1$ sparse approximation guarantees for $l_1$-optimization.
\begin{Theorem}\label{preth}
Let $H$ be a binary $m\times n$ measurement matrix such that $\forall \textit{\textbf{w}}=({w_{1}, w_{2}, \ldots, w_{n}})^T\in Nullsp_{\mathbb R}^{*}(H)$ and $\forall i \in supp(\textit{\textbf{w}})$,
\begin{equation}\label{cond}
    |w_{i}|\leq \frac{||\textit{\textbf{w}}||_{1}}
    {C_0},
\end{equation}
then for any set $K\subseteq[n]$ with $|K|=k<\frac{C_0}{2}$, the estimate $\hat{\textit{\textbf{x}}}$ produced by the $l_1$-optimization (\ref{l1}) will satisfy:
\begin{enumerate}
  \item
  \begin{equation}\label{l1l1}
    ||\textit{\textbf{x}}-\hat{\textit{\textbf{x}}}||_1\leq\frac{C_1}{k}||\textit{\textbf{x}}_{\bar{K}}||_1 \quad \mbox{with} \quad C_1\triangleq \frac{C_0}{\frac{C_0}{2k}-1};
  \end{equation}
  \item
  \begin{equation}\label{l2l1}
  ||\textit{\textbf{x}}-\hat{\textit{\textbf{x}}}||_2\leq\frac{C_2}{k}||\textit{\textbf{x}}_{\bar{K}}||_1 \quad \mbox{with} \quad C_2\triangleq \frac{\sqrt{C_0}}{\frac{C_0}{2k}-1};
  \end{equation}
  \item
  \begin{equation}\label{linfl1}
  ||\textit{\textbf{x}}-\hat{\textit{\textbf{x}}}||_\infty\leq\frac{C_3}{k}||\textit{\textbf{x}}_{\bar{K}}||_1 \quad \mbox{with} \quad C_3\triangleq\frac{1}{\frac{C_0}{2k}-1}.
  \end{equation}
\end{enumerate}
\end{Theorem}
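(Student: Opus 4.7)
The plan is to derive all three approximation guarantees from the pointwise hypothesis~(\ref{cond}) in two stages: first translate~(\ref{cond}) into a nullspace property on $H$, then combine that NSP with the basis-pursuit optimality of $\hat{\textit{\textbf{x}}}$ to control the error vector $\textit{\textbf{w}}\triangleq\hat{\textit{\textbf{x}}}-\textit{\textbf{x}}$ in the $l_1$, $l_2$, and $l_\infty$ norms.

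First I would observe that summing~(\ref{cond}) over any index set $K\subseteq[n]$ with $|K|=k$ yields $||\textit{\textbf{w}}_K||_1\leq k||\textit{\textbf{w}}||_1/C_0=(k/C_0)(||\textit{\textbf{w}}_K||_1+||\textit{\textbf{w}}_{\bar{K}}||_1)$ for every $\textit{\textbf{w}}\in Nullsp_{\mathbb{R}}^{*}(H)$. Rearranging gives $||\textit{\textbf{w}}_{\bar{K}}||_1\geq C\cdot||\textit{\textbf{w}}_K||_1$ with $C\triangleq C_0/k-1$, so $H\in NSP_{\mathbb{R}}^{\leq}(k,C)$, and the hypothesis $k<C_0/2$ is exactly what makes $C>1$.

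Second, set $\textit{\textbf{w}}=\hat{\textit{\textbf{x}}}-\textit{\textbf{x}}$. Since $H\hat{\textit{\textbf{x}}}=H\textit{\textbf{x}}$, $\textit{\textbf{w}}\in Nullsp_{\mathbb{R}}(H)$; the trivial case $\textit{\textbf{w}}=\textbf{0}$ gives all three bounds at once, so assume $\textit{\textbf{w}}\in Nullsp_{\mathbb{R}}^{*}(H)$. The basis-pursuit optimality $||\hat{\textit{\textbf{x}}}||_1\leq||\textit{\textbf{x}}||_1$, combined with the reverse triangle inequality applied on $K$ and on $\bar{K}$ separately, yields the standard estimate $||\textit{\textbf{w}}_{\bar{K}}||_1\leq 2||\textit{\textbf{x}}_{\bar{K}}||_1+||\textit{\textbf{w}}_K||_1$. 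Substituting NSP gives $||\textit{\textbf{w}}_K||_1\leq\frac{2}{C-1}||\textit{\textbf{x}}_{\bar{K}}||_1$, and then $||\textit{\textbf{w}}||_1=||\textit{\textbf{w}}_K||_1+||\textit{\textbf{w}}_{\bar{K}}||_1\leq\frac{2(C+1)}{C-1}||\textit{\textbf{x}}_{\bar{K}}||_1$. Plugging in $C=C_0/k-1$ checks that $2(C+1)/(C-1)=C_1/k$, which is precisely~(\ref{l1l1}).

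Third, the $l_2$ and $l_\infty$ estimates follow from~(\ref{cond}) almost for free. Since $\textit{\textbf{w}}$ attains its $l_\infty$-norm on $supp(\textit{\textbf{w}})$, (\ref{cond}) gives $||\textit{\textbf{w}}||_\infty\leq||\textit{\textbf{w}}||_1/C_0$, whence $||\textit{\textbf{w}}||_2^2\leq||\textit{\textbf{w}}||_\infty\cdot||\textit{\textbf{w}}||_1\leq||\textit{\textbf{w}}||_1^2/C_0$. Feeding the $l_1$ bound of step two into these two inequalities produces exactly the coefficients $C_2=C_1/\sqrt{C_0}$ and $C_3=C_1/C_0$ stated in~(\ref{l2l1}) and~(\ref{linfl1}). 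The main obstacle is not conceptual but bookkeeping: one must verify that after substituting $C=C_0/k-1$ the three reduction constants collapse to the exact forms written in the theorem, and one must keep the NSP strict ($C>1$, equivalently $C_0/(2k)-1>0$) so that all denominators stay positive. Beyond this, the argument is a clean marriage of the classical NSP-to-basis-pursuit reduction with the extra max-fractional control supplied by~(\ref{cond}).
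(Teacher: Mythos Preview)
Your argument is correct and follows essentially the same route as the paper: both reduce~(\ref{cond}) to the nullspace property $H\in NSP_{\mathbb{R}}^{\leq}(k,C_0/k-1)$ to obtain the $l_1/l_1$ bound, and then derive the $l_2$ and $l_\infty$ bounds from~(\ref{cond}) via $||\textit{\textbf{w}}||_2\le||\textit{\textbf{w}}||_1/\sqrt{C_0}$ and $||\textit{\textbf{w}}||_\infty\le||\textit{\textbf{w}}||_1/C_0$. The only difference is that the paper outsources the NSP-to-basis-pursuit reduction and the pseudoweight inequalities to cited results (\cite[Th.~1]{wxbh}, \cite[Th.~5, Th.~14]{adrs}, \cite[Lem.~1]{sxff06}), whereas you unfold these standard steps explicitly.
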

\begin{proof}
Since $H\textit{\textbf{x}}=\textit{\textbf{y}}$ and $H\hat{\textit{\textbf{x}}}=\textit{\textbf{y}}$, define $\textit{\textbf{w}}\triangleq \textit{\textbf{x}}-\hat{\textit{\textbf{x}}}$, then $\textit{\textbf{w}}\in Nullsp_\mathbb{R}(H)$.
If $\textit{\textbf{w}}=\textbf{0}$, (\ref{l1l1}), (\ref{l2l1}) and (\ref{linfl1}) holds obviously, so we only need to consider the case $\textit{\textbf{w}}\in Nullsp^*_\mathbb{R}(H)$.
\begin{enumerate}
\item (\ref{cond}) implies $\forall\, \textit{\textbf{w}}\in Nullsp_{\mathbb{R}}^{*}(H), \;||\textit{\textbf{w}}_K||_1\leq\frac{k}{C_0}||\textit{\textbf{w}}||_1$, or $H\in NSP_{\mathbb{R}}^{\leq}(k,C)$ with $C=\frac{C_0}{k}-1$.  Hence, (\ref{l1l1}) follows directly by the nullspace condition \cite[Th.1]{wxbh}, \cite[Th.5]{adrs}.
\item Applying (\ref{cond}) to \cite [Lemma 1]{sxff06}, we have
\begin{equation}\label{awgnineq}
w_p^{AWGN}(|\textit{\textbf{w}}|)=\frac{||\textit{\textbf{w}}||_1^2}
{||\textit{\textbf{w}}||_2^2}\geq C_0,
\end{equation}
which implies $||\textit{\textbf{x}}-\hat{\textit{\textbf{x}}}||_2\leq ||\textit{\textbf{x}}-\hat{\textit{\textbf{x}}}||_1 /  \sqrt{C_0}$ and (\ref{l2l1}) follows by (\ref{l1l1}).
\item
(\ref{cond}) implies $w_{max-frac}(|\textit{\textbf{w}}|)\geq C_0$.
Hence, (\ref{linfl1}) follows directly by \cite[Th.14]{adrs}.
\end{enumerate}
\end{proof}
\begin{Remark}
From (\ref{l1l1}), (\ref{l2l1}) or (\ref{linfl1}), it is easy to see that if $H$ satisfies (\ref{cond}), for any $k$-sparse signal $\textit{\textbf{x}}$ with $k<\frac{C_0}{2}$, $l_1$-optimization (\ref{l1}) outputs exact recovery of $\textit{\textbf{x}}$.
Even when $\textit{\textbf{x}}$ is not exactly sparse, (\ref{l1}) still produces a good $k$-sparse approximation that is within only a factor from the best $k$-term approximation of $\textit{\textbf{x}}$, where $k$ is smaller than $\frac{C_0}{2}$.
\end{Remark}
\begin{Remark}
Since (\ref{awgnineq}) can be implied by (\ref{cond}), Proposition \ref{citeth13} gives that the $l_1$-optimization (\ref{l1}) outputs exact recovery of $k$-sparse signals with $k<\frac{C_0}{4}$, which is only a half of $k<\frac{C_0}{2}$ that our result indicates.
Besides, when $C_0>4k$ one can easily verify that $\frac{C_2}{k}<\frac{C''}{\sqrt{k}}$ by letting $C'=C_0$, which means our approximation guarantee is sharper than that of Proposition \ref{citeth13}.
\end{Remark}

The following two theorems which evaluate $C_0$ in (\ref{cond}) are the key results of this paper. One is for binary matrices with girth $4$ and the other for girth greater than $4$.

\begin{Theorem}\label{thg4}
Let $H$ be a binary $m\times n$ measurement matrix with uniform column weight $\gamma$ and girth $g$.
Suppose the maximum inner product of any two distinct columns of $H$ is $\lambda$.
Then for any $\textit{\textbf{w}}=({w_{1}, w_{2}, \ldots, w_{n}})^T\in Nullsp_{\mathbb R}^*(H)$ and any $i \in supp(\textit{\textbf{w}})$,
\begin{equation}\label{mreg4}
    |w_{i}|\leq \frac{||\textit{\textbf{w}}||_{1}}
    {C_0}\quad \mbox{with}\quad C_0\triangleq\frac{2\gamma}{\lambda}.
\end{equation}
\end{Theorem}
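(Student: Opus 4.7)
The plan is to combine the $\gamma$ local parity-check equations at column $i$ with one global identity that follows from the uniform column weight. Without loss of generality assume $w_i>0$; else replace $\textit{\textbf{w}}$ by $-\textit{\textbf{w}}$. Write $N\triangleq\{j:w_j<0\}$ and let $S_N\triangleq \sum_{j\in N}|w_j|$ denote the total negative $l_1$-mass of $\textit{\textbf{w}}$.

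\emph{Global step.} Since every column of $H$ has weight exactly $\gamma$, the all-ones row vector satisfies $\textbf{1}^{T}H=\gamma\,\textbf{1}^{T}$. Pre-multiplying $H\textit{\textbf{w}}=\textbf{0}$ by $\textbf{1}^{T}$ therefore yields $\gamma\sum_{j}w_{j}=0$, hence $\sum_{j}w_{j}=0$. Equivalently, the positive and negative masses of $\textit{\textbf{w}}$ are equal, so
\begin{equation*}
S_N=\tfrac{1}{2}\,||\textit{\textbf{w}}||_1.
\end{equation*}
This identity is the source of the factor $2$ in $C_0=2\gamma/\lambda$.

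\emph{Local step.} Let $r_1,\dots,r_\gamma$ denote the rows in which column $i$ carries a $1$, and set $Q_s\triangleq -\sum_{j\in N,\,H_{r_s,j}=1}w_j\geq 0$ for each $s$. The nullspace equation for row $r_s$ reads $w_i+\sum_{j\neq i,\,H_{r_s,j}=1}w_j=0$; separating the inner sum into its positive and negative parts and discarding the (non-negative) positive part yields $Q_s\geq w_i$. Summing over $s$ and interchanging the order of summation, each $j\in N$ is counted exactly $\langle c_i,c_j\rangle\leq\lambda$ times, where $c_i,c_j$ denote the corresponding columns of $H$. Combining with the global identity,
\begin{equation*}
\gamma\,w_i\;\leq\;\sum_{s=1}^{\gamma}Q_s\;\leq\;\lambda\sum_{j\in N}|w_j|\;=\;\lambda\,S_N\;=\;\tfrac{\lambda}{2}\,||\textit{\textbf{w}}||_1,
\end{equation*}
which rearranges to $|w_i|\leq ||\textit{\textbf{w}}||_1/C_0$ with $C_0=2\gamma/\lambda$.

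The main obstacle, and the reason for the global/local split, is that a direct triangle-inequality attack on the same $\gamma$ row equations yields only $\gamma|w_i|\leq\lambda(||\textit{\textbf{w}}||_1-|w_i|)$, i.e., the weaker bound $|w_i|\leq\lambda||\textit{\textbf{w}}||_1/(\gamma+\lambda)$. To sharpen this to $2\gamma/\lambda$ one must replace $||\textit{\textbf{w}}||_1-|w_i|$ by the strictly smaller quantity $S_N$, and this is precisely what the uniform-column-weight identity $\sum_j w_j=0$ makes possible.
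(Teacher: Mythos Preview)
Your proof is correct and follows essentially the same approach as the paper's: both use the uniform column weight to obtain $\sum_j w_j=0$ (hence the negative mass equals $\tfrac12\|\textit{\textbf{w}}\|_1$), and both combine the $\gamma$ parity checks through column~$i$, bounding the resulting overlap with each negative coordinate by $\lambda$. The only cosmetic difference is that the paper sums the $\gamma$ rows first into a single vector $\tilde{\textit{\textbf{h}}}(i)$ and then splits into positive/negative parts, while you split each row first and then sum; your WLOG reduction via $\textit{\textbf{w}}\mapsto -\textit{\textbf{w}}$ also replaces the paper's explicit treatment of the case $w_i<0$.
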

\begin{proof}
Firstly, we claim that for any binary matrix $H$ with uniform column weight $\gamma$ and
any $\textit{\textbf{w}}=(w_{1}, w_{2}, \ldots, w_{n})^T\in Nullsp_{\mathbb{R}}^{*}(H)$, we have
\begin{equation}\label{poseqneg}
  \sum_{j\;:\; w_{j}>0}{w_{j}}=-\sum_{j\;:\; w_{j}<0}{w_{j}}=\frac{||\textit{\textbf{w}}||_{1}}{2}.
\end{equation}
This is because by summing all rows of $H$, we obtain a new row $\tilde{\textit{\textbf{h}}}=(\gamma, \gamma, \ldots, \gamma)$,
and $\tilde{\textit{\textbf{h}}}\textit{\textbf{w}}=0$ implies (\ref{poseqneg}).

For any $\textit{\textbf{w}}\in Nullsp_{\mathbb{R}}^{*}(H)$, we split its support
$supp(\textit{\textbf{w}})$ into $supp(\textit{\textbf{w}}^{+})$ and $supp(\textit{\textbf{w}}^{-})$, where
\begin{eqnarray}
\label{split1}
supp(\textit{\textbf{w}}^{+})&\triangleq&\{\;i\;:\; w_{i}>0\;\},\nonumber\\
\label{split2}
supp(\textit{\textbf{w}}^{-})&\triangleq&\{\;i\;:\; w_{i}<0\;\}.\nonumber
\end{eqnarray}
For any fixed $i\in supp(\textit{\textbf{w}}^{+})$,
by summing the $\gamma$ rows of $H$ each of which has component `1' in the $i$-th position, we obtain a new row $\tilde{\textit{\textbf{h}}}(i)=(\tilde{h}_1(i), \tilde{h}_2(i), \ldots, \tilde{h}_n(i))$ where $\tilde{h}_i(i)=\gamma$ and for any $j\neq i$, $0\leq \tilde{h}_j(i)\leq\lambda$, since the maximum inner product of any two distinct columns of $H$ is $\lambda$.
Clearly, $\tilde{\textit{\textbf{h}}}(i)\textit{\textbf{w}}=0$, i.e.,
\begin{eqnarray*}
0=
\sum_{j\in supp(\textit{\textbf{w}}^+)}\tilde{h}_j(i)w_j+\sum_{j\in supp(\textit{\textbf{w}}^-)}\tilde{h}_j(i)w_j,
\end{eqnarray*}
which implies that
\begin{eqnarray}\label{pfpart1}
-\sum_{j\in supp(\textit{\textbf{w}}^-)}\tilde{h}_j(i)w_j=\sum_{j\in supp(\textit{\textbf{w}}^+)}\tilde{h}_j(i)w_j \overset{(a)}{\ge} \gamma w_i,
\end{eqnarray}
where (a) follows by $i\in supp(\textit{\textbf{w}}^+)$, $\tilde h_i(i)=\gamma$ and the other items in the summation are non-negative.
On the other hand, since $0\leq \tilde{h}_j(i)\leq\lambda$ for any $j\in supp(\textit{\textbf{w}}^-)$,
\begin{eqnarray}\label{pfpart2}
-\sum_{j\in supp(\textit{\textbf{w}}^-)} \tilde{h}_j(i)w_j\leq-\lambda\sum_{j\in supp(\textit{\textbf{w}}^-)}w_j\overset{(b)}{=}\lambda\frac{||\textit{\textbf{w}}||_1}{2}.
\end{eqnarray}
where (b) follows by (\ref{poseqneg}).
Combining (\ref{pfpart1}) with (\ref{pfpart2}), we have that for any $w_i>0$, $w_i\leq\frac{\lambda||\textit{\textbf{w}}||_1}{2\gamma}$.

Similarly, for any $w_i<0$, we have that
\begin{eqnarray*}
-\gamma w_i \le -\!\!\!\!\sum_{j\in supp(\textit{\textbf{w}}^-)}\!\!\tilde{h}_j(i)w_j=\!\!\!\!\sum_{j\in supp(\textit{\textbf{w}}^+)}\!\!\tilde{h}_j(i)w_j \,\le\, \lambda\frac{||\textit{\textbf{w}}||_1}{2},
\end{eqnarray*}
or $-w_i\leq\frac{\lambda||\textit{\textbf{w}}||_1}{2\gamma}$, and this completes the proof.
\end{proof}
\begin{Remark}
Combining Theorem \ref{thg4} with Theorem \ref{preth}, we have that the $l_1$-optimization (\ref{l1}) can exactly reconstruct any $k$-sparse signal with $k<\frac{\gamma}{\lambda}$ when the binary measurement matrix $H$ with uniform column weight $\gamma$ is used.
According to the best known result based on RIP \cite[Prop.1]{jbsd}, $H$ satisfies the RIP of order $s<1+\frac{\gamma}{\lambda}$ and
the $l_1$-optimization (\ref{l1}) can exactly reconstruct $k$-sparse signals with $k<(1+\frac{\gamma}{\lambda})/2$. Thus, our result of Theorem \ref{thg4} improves it significantly in most cases.
\end{Remark}

\begin{Example}
Let $g=4$, $\gamma=4$, $\lambda=2$, and $H$ be the point-plane incidence matrix of a 3-dimensional Euclidean geometry
over $\{0,1\}$ (see \cite{cd} for details), i.e.,
\begin{center}
\begin{eqnarray*}
    H=\left(
  \begin{array}{cccccccccccccc}
    1&0&0&1&1&1&0&0&1&1&0&0&0&1 \\
    0&1&0&0&1&1&1&1&0&1&1&0&0&0 \\
    1&0&1&0&0&1&1&0&1&0&1&1&0&0 \\
    1&1&0&1&0&0&1&0&0&1&0&1&1&0 \\
    1&1&1&0&1&0&0&0&0&0&1&0&1&1 \\
    0&1&1&1&0&1&0&1&0&0&0&1&0&1 \\
    0&0&1&1&1&0&1&1&1&0&0&0&1&0 \\
    0&0&0&0&0&0&0&1&1&1&1&1&1&1
  \end{array}
\right).
\end{eqnarray*}
\end{center}
It is easy to see $\bar{\textit{\textbf{w}}}=(1, -1, 0, 0, 0, 0, 0, 1, -1, 0, 0, 0, 0, 0)^T \in Nullsp_{\mathbb{R}}^{*}(H)$ and $\forall i\in supp(\bar{\textit{\textbf{w}}})$, $|\bar{w}_i|={||\bar{\textit{\textbf{w}}}||_1}/{4}$. Note that $1+\gamma/\lambda=3$, $2\gamma/\lambda=4$ and the bound (\ref{mreg4}) is achieved.
\end{Example}

For a binary matrix with $g\geq6$, $\lambda=1$ and thus $|w_i|\leq\frac{||\textit{\textbf{w}}||_1}{2\gamma}$  by Theorem \ref{thg4}, which is independent of $g$. The next theorem applies to $g>4$ and gives better results for $g>8$.

\begin{Theorem}\label{thg6}
Let $H$ be a binary $m\times n$ measurement matrix with uniform column weight $\gamma\ge 2$ and girth $g\geq6$.
Then for any $\textit{\textbf{w}}\in Nullsp_{\mathbb R}^*(H)$ and any $i\in supp(\textit{\textbf{w}})$,
\begin{equation}\label{mreg6}
    |w_{i}|\leq \frac{||\textit{\textbf{w}}||_{1}}
    {C_0^{(g\geq6)}}\quad \mbox{with}\quad  C_0^{(g\geq6)}\triangleq 2\sum_{u=0}^{t+1}{(\gamma-1)^u},
\end{equation}
where $t\triangleq\lfloor\frac{g-6}{4}\rfloor$ and $\lfloor x\rfloor$ is the floor function which denotes the maximum integer not greater than $x$.
\end{Theorem}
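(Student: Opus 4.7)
The plan is to lift the single-row argument of Theorem~\ref{thg4} into an iterated argument that exploits the tree-like neighborhood of $v_i$ afforded by the girth. Without loss of generality assume $w_i > 0$ (the case $w_i < 0$ is symmetric). Let $\mathcal{T}$ be the BFS tree rooted at $v_i$ in the Tanner graph $G_H$. Since $g \geq 4t + 6$ by the definition of $t$, $G_H$ contains no cycle of length $\leq 4(t+1)$ through $v_i$, so the nodes of $\mathcal{T}$ down to depth $2(t+1)$ are pairwise distinct. For $u = 0, 1, \ldots, t+1$, let $\Gamma_{2u}$ denote the set of variable nodes at depth $2u$ in $\mathcal{T}$; these sets are pairwise disjoint, and $\Gamma_0 = \{v_i\}$.

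For each $u \in \{1, \ldots, t+1\}$, I would sum the rows of $H$ indexed by the check nodes at depth $2u - 1$ in $\mathcal{T}$, obtaining a row $\rho_u$ in the row span of $H$. The girth bound ensures that every variable neighbor of a depth-$(2u-1)$ check is either its unique tree-parent in $\Gamma_{2(u-1)}$ or lies at depth exactly $2u$ in $\mathcal{T}$; any alternative would close a cycle of length at most $4(t+1) < g$ through $v_i$. Counting incidences, $\rho_u$ has entry $\gamma$ at $v_i$ when $u = 1$, entry $\gamma - 1$ at each $v_j \in \Gamma_{2(u-1)}$ when $u \geq 2$, entry $1$ at each $v_k \in \Gamma_{2u}$, and $0$ elsewhere. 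Since $\rho_u \textit{\textbf{w}} = 0$, setting $S_u := \sum_{j \in \Gamma_{2u}} w_j$ yields $\gamma w_i + S_1 = 0$ and $(\gamma - 1) S_{u-1} + S_u = 0$ for $2 \leq u \leq t+1$, which solves to $S_u = (-1)^u \gamma (\gamma - 1)^{u-1} w_i$ for $u \geq 1$.

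To close the argument I would translate these signed totals into mass bounds on $\textit{\textbf{w}}$. Decompose $S_u = P_u - N_u$ into its positive and negative parts on $\Gamma_{2u}$. Since $w_i > 0$, the sign $S_u > 0$ for even $u$ forces $P_u \geq S_u$, while $S_u < 0$ for odd $u$ forces $N_u \geq -S_u$. As the $\Gamma_{2u}$ are pairwise disjoint and (\ref{poseqneg}) gives $\sum_{j: w_j>0} w_j = \sum_{j: w_j<0}(-w_j) = ||\textit{\textbf{w}}||_1 / 2$, we obtain $\sum_{u\text{ even},\, 0 \leq u \leq t+1} S_u \leq ||\textit{\textbf{w}}||_1 / 2$ and $\sum_{u\text{ odd},\, 1 \leq u \leq t+1} (-S_u) \leq ||\textit{\textbf{w}}||_1 / 2$. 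Picking the parity that matches that of $t+1$ (so that the chain reaches $u = t+1$), a short geometric-series identity (with a harmless case split between $\gamma = 2$ and $\gamma \geq 3$) shows that the chosen sum equals exactly $w_i \sum_{v=0}^{t+1}(\gamma-1)^v$, yielding $|w_i| \leq ||\textit{\textbf{w}}||_1 / C_0^{(g\geq 6)}$.

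The main technical obstacle I anticipate is step two: carefully invoking the girth bound to certify that $\rho_u$ has exactly the claimed support pattern, ruling out any unexpected variable neighbor of the summed check nodes and verifying that the $\gamma-1$ children of $\Gamma_{2(u-1)}$ variables constitute all depth-$(2u-1)$ check nodes in $\mathcal{T}$. Once this is nailed down, the recursion and the geometric-series bookkeeping in step three proceed routinely.
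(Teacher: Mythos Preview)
Your proposal is correct and follows essentially the same strategy as the paper: build the local tree rooted at $v_i$, derive the level identities $S_u=(-1)^u\gamma(\gamma-1)^{u-1}w_i$ from summing the appropriate check rows, and then bound the signed level sums against $\|\textit{\textbf{w}}\|_1/2$ via (\ref{poseqneg}) and disjointness of the levels. Your presentation is in fact a bit cleaner than the paper's: by always selecting the parity that matches $t+1$, you collapse the paper's four-way case split on the parities of $g/2$ and $t$ into a single argument, and you never need the auxiliary partial level $N_{t+1}(f^*)$ that the paper introduces when $g/2$ is even (the paper itself remarks that both summing methods give identical results in those cases, so nothing is lost). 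The side remark about splitting $\gamma=2$ versus $\gamma\geq 3$ is unnecessary---the telescoping identity $\gamma(\gamma-1)^v=(\gamma-1)^v+(\gamma-1)^{v+1}$ handles all $\gamma\geq 2$ uniformly---but it does no harm.
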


\begin{proof}
Clearly, $g=4t+6$ for odd $g/2$ and
$g=4t+8$ for even $g/2$.
Let $G_{H}$ be the Tanner graph of $H$ with $m$ check nodes and $n$ variable nodes.
For any variable node $i\in supp(\textit{\textbf{w}})$, we construct a local tree of $i$ (see Fig. \ref{local_tree}) as in the proof of \cite[Th. 3.1]{ckds}\cite[Th. 1]{sxff}.

In the local tree of $i$, $i$ is the root
of the tree. A check node $f$ connected to $i$ is called a child of
$i$, and a variable node $j$ connected to $f$ except its parent $i$
is called a child of $f$ or a grandchild of $i$, and a check node
$e$ connected to $j$ except its parent $f$ is called a child of $j$,
and so on. For a variable node $j$ in the local tree, let ${\rm
child}(j)$ and ${\rm grch}(j)$ denote the sets of all children and
grandchildren of $j$ respectively. Note that
$
\label{grch} {\rm grch}(j)=\bigcup_{f\in {\rm child}(j)} {\rm
child}(f).
$
All nodes in $L_0(i)= {\rm grch}(i)$ are \emph{Level}-0
variable nodes. For $u=1,2,\ldots,t$, all nodes in
\begin{eqnarray}
\label{lm} L_u(i)= \bigcup_{j\in L_{u-1}(i)}{\rm grch}(j)
\end{eqnarray}
are \emph{Level}-$u$ variable nodes. Fixing a check node
$f^*\in {\rm child}(i)$, denote $N_0(f^*)={\rm child}(f^*)$ and
\begin{eqnarray}
\label{nm} N_{u}(f^*)=\bigcup_{j\in N_{u-1}(f^*)}{\rm grch}(j),
\;\;\; u=1,2,\ldots, t+1.
\end{eqnarray}
The local tree of $i$ has levels $0$ through $t$ if $g=4t+6$ and $0$
through $t+1$ if $g=4t+8$, where $N_{t+1}(f^*)$ is the set of
$(t+1)$-th level nodes. Since the Tanner graph $G_H$ has girth $g\ge
6$, it is clear that if $g=4t+6$, $\{i\},
L_0(i),\ldots, L_t(i)$ are pairwise disjoint and if $g=4t+8$, $\{i\}, L_0(i),\ldots,L_t(i)$, $N_{t+1}(f^*)$ are pairwise
disjoint.

Since there are $\gamma$ $1$'s in every column of $H$, we have that
$|{\rm child}(i)|=\gamma$ and $|{\rm
child}(j)|=\gamma-1$ for any intermediate variable node $j$.
Since $\textit{\textbf{w}}\in Nullsp_{\mathbb R}^*(H)$ satisfies every check equation in ${\rm child}(i)$, we have
$\gamma w_{i} + \sum_{j\in L_{0}(i)}w_{j}=0$ by adding
these equations. Similarly, $(\gamma-1) w_{j} + \sum_{j'\in {\rm grch}(j)}w_{j'}=0$
for any intermediate variable node $j$. Thus, by using the above two equalities iteratively, we have that
\begin{eqnarray}
  w_{i} &=& (-1)^0\cdot w_{i}, \label{eq1}\\
  \gamma w_{i} &=& (-1)^1\cdot \sum_{j\in L_{0}(i)}w_{j}, \label{eq2}\\
  \gamma (\gamma-1) w_{i} &=& (-1)^2\cdot \sum_{j\in L_{1}(i)}w_{j}, \label{eq3}\\
  \gamma (\gamma-1)^2 w_{i} &=& (-1)^3\cdot \sum_{j\in L_{2}(i)}w_{j}, \label{eq4}\\
  &\vdots& \nonumber\\
  \gamma (\gamma-1)^{t-1} w_{i} &=& (-1)^t\cdot \sum_{j\in L_{t-1}(i)}w_{j}, \label{eq_{t+1}}\\
  \gamma (\gamma-1)^t w_{i} &=& (-1)^{t+1}\cdot \sum_{j\in L_{t}(i)}w_{j}, \label{eq_{t+2}}
\end{eqnarray}
and when $g=4t+8$, we additionally have
\begin{eqnarray}
  (\gamma-1)^{t+1} w_{i} &=& (-1)^{t+2}\cdot \sum_{j\in N_{t+1}(f^*)}w_{j}. \label{eq_{t+3}}
\end{eqnarray}

\begin{figure}[htbp]
\centering
\includegraphics[width=0.4\textwidth]{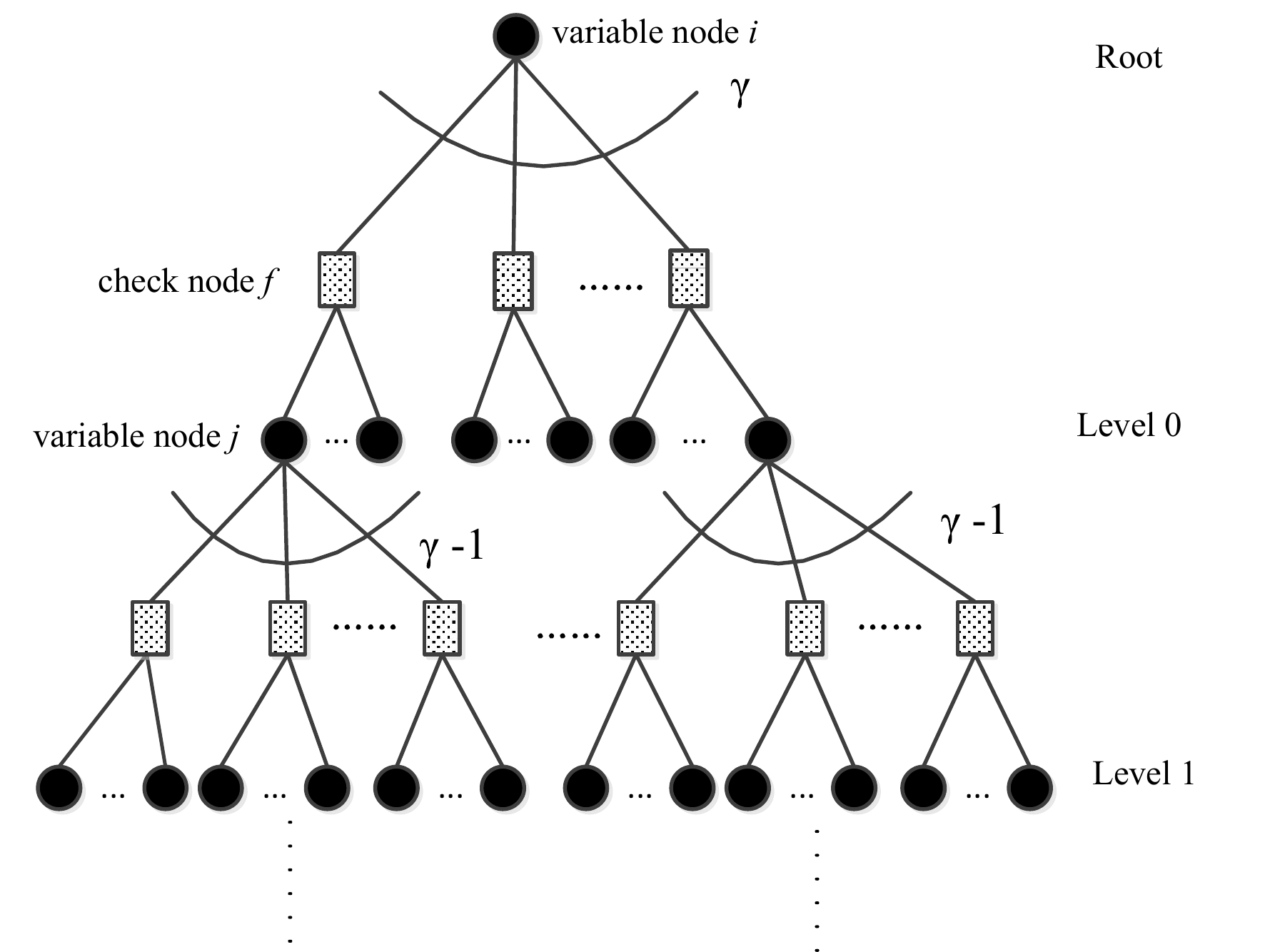}
\begin{center}
  \caption {Local tree of variable node $i$}\label{local_tree}
\end{center}
\end{figure}

The following proofs are divided into 4 cases according to the parities of $g/2$ and $t$.

\textit{Case 1: $g/2$ is odd, $t$ is odd}.

If $w_i>0$, by summing all equations having even power of $(-1)$ on the right, or (\ref{eq1}), (\ref{eq3}), \ldots, and (\ref{eq_{t+2}}), we have
\begin{eqnarray}
&&\left[1+\sum_{s=0}^{(t-1)/2}\gamma(\gamma-1)^{2s+1}\right]w_i\nonumber\\
&=&w_i+\sum_{s=0}^{(t-1)/2}\sum_{j\in L_{2s+1}(i)}{w_j}\nonumber\\
&\le &w_i+\sum_{s=0}^{(t-1)/2}\sum_{j\in L_{2s+1}(i),\;w_j>0}{w_j}\nonumber\\
&\le &w_i+\sum_{u=0}^{t}\sum_{j\in L_{u}(i),\;w_j>0}{w_j}\nonumber\\
&\overset{(a)}{\leq}& \sum_{j\;:\;w_{j}>0}{w_{j}} \;
\overset{(b)}{=}\;\frac{||\textit{\textbf{w}}||_{1}}{2},\label{pos0}
\end{eqnarray}
where (a) holds because $\{i\},
L_0(i),\ldots, L_t(i)$ are pairwise disjoint, and (b) follows by (\ref{poseqneg}).

If $w_i<0$, we have
\begin{eqnarray}
  &&\left[1+\sum_{s=0}^{(t-1)/2}\gamma(\gamma-1)^{2s+1}\right]w_i\nonumber\\
&=&w_i+\sum_{s=0}^{(t-1)/2}\sum_{j\in L_{2s+1}(i)}{w_j}\nonumber\\
&\ge &w_i+\sum_{s=0}^{(t-1)/2}\sum_{j\in L_{2s+1}(i),\;w_j<0}{w_j}\nonumber\\
&\ge & \sum_{j\;:\;w_{j}<0}{w_{j}} \;
{=}\;-\frac{||\textit{\textbf{w}}||_{1}}{2}.\label{pos0n}
\end{eqnarray}

Therefore, combining (\ref{pos0}) with (\ref{pos0n}), we have that
\begin{eqnarray}\label{part0}
|w_{i}|&\leq&\frac{||\textit{\textbf{w}}||_{1}}
    {2\left[1+\sum_{s=0}^{(t-1)/2}\gamma(\gamma-1)^{2s+1}\right]}\nonumber\\
    &\overset{(c)}{=}&\frac{||\textit{\textbf{w}}||_{1}}{2\sum_{u=0}^{t+1}(\gamma-1)^u},
\end{eqnarray}
where (c) follows from $\gamma(\gamma-1)^{v}=(\gamma-1)^{v}+(\gamma-1)^{v+1}$.

By summing all equations having odd power of $(-1)$ on the right, or (\ref{eq2}), (\ref{eq4}), \dots, and (\ref{eq_{t+1}}), we could similarly
obtain
\begin{eqnarray}
|w_i|
&\leq& \frac{||\textit{\textbf{w}}||_{1}}{2\sum_{s=0}^{(t-1)/2}\gamma(\gamma-1)^{2s}}. \label{neg0}
\end{eqnarray}
But it is weaker than (\ref{part0}) since
\begin{eqnarray*}
  1+\sum_{s=0}^{(t-1)/2}\gamma(\gamma-1)^{2s+1}&>&\sum_{s=0}^{(t-1)/2}\gamma(\gamma-1)^{2s}.
\end{eqnarray*}

\emph{Case 2: $g/2$ is odd, $t$ is even.}

With totally similar arguments, we have
\begin{eqnarray}\label{pos1}
  |w_i|&\leq&\frac{||\textit{\textbf{w}}||_{1}}{2\left[1+\sum_{s=1}^{t/2}\gamma(\gamma-1)^{2s-1}\right]}
\end{eqnarray}
and
\begin{eqnarray}\label{neg1}
     |w_{i}|\leq\frac{||\textit{\textbf{w}}||_{1}}{2\sum_{s=0}^{t/2}\gamma(\gamma-1)^{2s}}.
\end{eqnarray}
Since
\begin{eqnarray*}
  \sum_{u=0}^{t+1}(\gamma-1)^u=\sum_{s=0}^{t/2}\gamma(\gamma-1)^{2s}>1+\sum_{s=1}^{t/2}\gamma(\gamma-1)^{2s-1},
\end{eqnarray*}
we also have
\begin{eqnarray}\label{part1}
    |w_{i}|&\leq& \frac{||\textit{\textbf{w}}||_{1}}
    {2\sum_{u=0}^{t+1}(\gamma-1)^u}.
\end{eqnarray}

\textit{Case 3: $g/2$ is even, $t$ is odd}.

This case is different from Case 1 in that while summing the equations with odd power of $(-1)$ on the right, an extra (\ref{eq_{t+3}}) needs to be summed.
By summing (\ref{eq2}), (\ref{eq4}), \dots, (\ref{eq_{t+1}}) and (\ref{eq_{t+3}}), we could similarly obtain
\begin{eqnarray}
  |w_{i}|&\leq&\frac{||\textit{\textbf{w}}||_{1}}{2\left[\sum_{s=0}^{(t-1)/2}\gamma(\gamma-1)^{2s}+(\gamma-1)^{t+1}\right]}\nonumber\\
  &=& \frac{||\textit{\textbf{w}}||_{1}}{2\sum_{u=0}^{t+1}(\gamma-1)^u}.\label{part2}
\end{eqnarray}
Note that the two summing methods give identical results.

\emph{Case 4: $g/2$ is even, $t$ is even}.

This case is different from Case 2 in that while summing
the equations having even power of $(-1)$ on the right, an extra (\ref{eq_{t+3}}) needs to be summed.
By summing (\ref{eq1}), (\ref{eq3}), \ldots, (\ref{eq_{t+1}}) and (\ref{eq_{t+3}}), we could similarly obtain
\begin{eqnarray}
  |w_{i}|&\leq&\frac{||\textit{\textbf{w}}||_{1}}{2\left[1+\sum_{s=1}^{t/2}\gamma(\gamma-1)^{2s-1}+(\gamma-1)^{t+1}\right]}\nonumber\\
  &=& \frac{||\textit{\textbf{w}}||_{1}}{2\sum_{u=0}^{t+1}(\gamma-1)^u}.\label{part3}
\end{eqnarray}
Note that the two summing methods give identical results.

Combining (\ref{part0}) with (\ref{part1})--(\ref{part3}), the proof is completed.
\end{proof}

\begin{Remark}\label{comp}
Combining Theorem \ref{thg6} with Theorem \ref{preth}, we have that the $l_1$-optimization (\ref{l1}) can exactly reconstruct any $k$-sparse signal with $k<\sum_{u=0}^{t+1}{(\gamma-1)^u}$ when the binary measurement matrix $H$ with uniform column weight $\gamma\ge 2$ and girth $g\ge 6$ is used.
Kelley and Sridhara \cite{ckds} showed that
\begin{eqnarray}\label{pseudo}
  &&w_p^{BSC,min}(H)\nonumber\\
  &\geq&\left\{ \begin{array}{ll}
  2\sum_{u=0}^{t+1}(\gamma-1)^u-(\gamma-1)^{t+1},\; g/2 \;\;\mbox{is odd},\\
  2\sum_{u=0}^{t+1}(\gamma-1)^u,\quad \quad\quad \quad\quad\;\;\; g/2 \;\;\mbox{is even}.
\end{array}
\right.
\end{eqnarray}
By the result based on NSP \cite[Lem.12, Th.3]{adrs}, the $l_1$-optimization (\ref{l1}) can exactly reconstruct $k$-sparse signals with
\begin{eqnarray}\label{pseudo1}
  k\;<\;
  \left\{ \begin{array}{ll}
  \sum_{u=0}^{t+1}(\gamma-1)^u-\frac{(\gamma-1)^{t+1}}{2},\quad g/2 \;\;\mbox{is odd},\\
  \sum_{u=0}^{t+1}(\gamma-1)^u,\quad \quad\quad \quad\quad\;\;\; g/2 \;\;\mbox{is even}.
\end{array}
\right.
\end{eqnarray}
Thus, our result of Theorem \ref{thg6} improves it when $g/2$ is odd.
\end{Remark}

The following examples show that the bound in (\ref{mreg6}) could be achieved for $g=6,8,10,12$ respectively.

\begin{Example}
Let $g=6$, $\gamma=2$, and $H$ be the point-line incidence matrix of a Euclidean plane over $\{0,1\}$ \cite{cd}, i.e.,
\begin{eqnarray*}
H=\left(
  \begin{array}{cccccc}
    1&0&1&1&0&0 \\
    1&1&0&0&1&0 \\
    0&1&1&0&0&1 \\
    0&0&0&1&1&1 \\
  \end{array}
\right).
\end{eqnarray*}
Let $\bar{\textit{\textbf{w}}}=(1, 0, -1, 0, -1, 1)^T\in Nullsp_{\mathbb R}^*(H)$, $\forall i\in supp(\bar{\textit{\textbf{w}}})$, $|\bar w_i|={||\bar{\textit{\textbf{w}}}||_1}/{4}$, which meets the bound (\ref{mreg6}).
\end{Example}

\begin{Example}
Let $g=8$, $\gamma=2$, and $H$ be the point-line incidence matrix of a cube (see \cite{cd} for details), i.e.,
\begin{eqnarray*}\label{cubematrix}
H=\left(
  \begin{array}{cccccccccccc}
    1&0&0&1&0&0&0&0&1&0&0&0\\
    1&1&0&0&0&0&0&0&0&1&0&0\\
    0&1&1&0&1&0&0&0&0&0&0&0\\
    0&0&1&1&0&0&1&0&0&0&0&0\\
    0&0&0&0&0&0&0&1&1&0&0&1\\
    0&0&0&0&0&0&0&0&0&1&1&1\\
    0&0&0&0&1&1&0&0&0&0&1&0\\
    0&0&0&0&0&1&1&1&0&0&0&0\\
  \end{array}
\right).
\end{eqnarray*}
Let $\bar{\textit{\textbf{w}}}=(1, -1, 1, -1, 0, 0, 0, 0, 0, 0, 0, 0)^T\in Nullsp_{\mathbb R}^*(H)$, $\forall i\in supp(\bar{\textit{\textbf{w}}})$, $|\bar{w}_i|={||\bar{\textit{\textbf{w}}}||_1}/{4}$, which meets the bound (\ref{mreg6}).
\end{Example}

\begin{Example}
Consider the point-line incidence matrix $H$ of $GP(5, 2)$ \cite{cd}, $g=10$, $\gamma=2$,
\begin{center}
\begin{eqnarray*}\label{pgmatrix}
H\!=\!\!\left(\!\!
  \begin{array}{ccccccccccccccc}
  1&0&0&0&1&1&0&0&0&0&0&0&0&0&0\\
  1&1&0&0&0&0&0&0&0&1&0&0&0&0&0\\
  0&1&1&0&0&0&0&0&1&0&0&0&0&0&0\\
  0&0&1&1&0&0&0&1&0&0&0&0&0&0&0\\
  0&0&0&1&1&0&1&0&0&0&0&0&0&0&0\\
  0&0&0&0&0&1&0&0&0&0&1&0&0&0&1\\
  0&0&0&0&0&0&0&0&0&1&0&0&1&1&0\\
  0&0&0&0&0&0&0&0&1&0&1&1&0&0&0\\
  0&0&0&0&0&0&0&1&0&0&0&0&0&1&1\\
  0&0&0&0&0&0&1&0&0&0&0&1&1&0&0
  \end{array}
\!\!\right).
\end{eqnarray*}
\end{center}
Let $\bar{\textit{\textbf{w}}}=(-1,1,0,0,1,0,-1,0,-1,0,0,1,0,0,0)^T$, then $\bar{\textit{\textbf{w}}}\in
Nullsp_{\mathbb R}^*(H)$ and it meets the bound (\ref{mreg6}).
\end{Example}

\begin{Example}
Consider the following $H$ with $g=12$, $\gamma=2$,
\begin{center}
\begin{eqnarray*}
H=\left(
  \begin{array}{cccccccccccc}
  1&1&1&0&0&0&0&0&0&0&0&0 \\
  0&0&0&1&1&1&0&0&0&0&0&0 \\
  0&0&0&0&0&0&1&1&1&0&0&0  \\
  0&0&0&0&0&0&0&0&0&1&1&1  \\
  1&0&0&1&0&0&0&0&0&0&0&0\\
  0&0&0&0&1&0&1&0&0&0&0&0  \\
  0&1&0&0&0&0&0&1&0&0&0&0  \\
  0&0&1&0&0&0&0&0&0&1&0&0   \\
  0&0&0&0&0&1&0&0&0&0&1&0  \\
  0&0&0&0&0&0&0&0&1&0&0&1
  \end{array}
\right)
\end{eqnarray*}
\end{center}
Let $\bar{\textit{\textbf{w}}}=(1,0,-1,-1,0,1,0,0,0,1,-1,0)^T\in Nullsp_{\mathbb R}^*(H)$, $\forall i\in supp(\bar{\textit{\textbf{w}}})$, $|\bar w_i|={||\bar{\textit{\textbf{w}}}||_1}/{6}$, which meets the bound (\ref{mreg6}).
\end{Example}
\section{Conclusion}\label{conclusion}
This paper has considered the performance of binary measurement matrices with uniform column weight under basis pursuit.
Girth of such matrices is employed to provide sharp $l_1/l_1$, $l_2/l_1$ and $l_\infty/l_1$ sparse approximation guarantees for $l_1$-optimizations, which improve previous known RIP results for any girth and NSP results for girth $4$ and $4t+6$.
Our results show that the larger the girth and/or column weight is, the better the binary measurement matrix will be, which
further shows that large girth has positive impacts on the performance of binary measurement matrices. Our results and methods have close relations with that in LDPC codes, which suggests that the parity-check matrices of good LDPC codes are important candidates of measurement matrices.

For fixed $n$, $g$ and $\gamma$, there seems to be a minimum $m$ for any binary matrix.
In the future, we will explore this and it will help us to check that whether binary matrices can overcome $k=O(\sqrt{m})$ \cite{lggz} in the sense of ``strong bounds" \cite{adrs}.

\section*{Acknowledgment}
The authors wish to express their appreciation to the anonymous reviewers for their valuable suggestions and comments that helped to greatly improve the paper.

This research is supported in part by the 973 Program of China (2012CB315803), the Research Fund for the Doctoral Program of Higher Education of China (20100002110033), and the open research fund of National Mobile Communications Research Laboratory of Southeast University (2011D11).
Shu-Tao Xia is also with the National Mobile Communications Research Laboratory of Southeast University of China.

\end{document}